\documentclass[12pt,reqno]{amsart}

\usepackage{amsaddr,amssymb}
\usepackage{graphicx} 
\usepackage{caption}
\usepackage{subcaption}

\newtheorem{claim}{Claim}[section]
\newtheorem{theorem}[claim]{Theorem}

\theoremstyle{definition}
\newtheorem{remark}[claim]{Remark}

\newtheorem{definition}[claim]{Definition}

\newcommand{\soutg}{\bgroup\markoverwith{\textcolor{green}{\rule[.5ex]{2pt}{1pt}}}\ULon}
\newcommand{\soutb}{\bgroup\markoverwith{\textcolor{blue}{\rule[.5ex]{2pt}{1pt}}}\ULon}
\newcommand{\soutr}{\bgroup\markoverwith{\textcolor{red}{\rule[.5ex]{2pt}{1pt}}}\ULon}

\numberwithin{equation}{section}

%this is for the arXiv version
\hoffset=-15mm\voffset=-15mm\textwidth=155mm\textheight=235mm

\begin{document}
\title[Determinants for non-selfadjoint operators]{The role of the branch cut of the logarithm in the definition of the spectral determinant for non-selfadjoint operators}

\author{Ji\v{r}\'{\i} Lipovsk\'{y}}
\address{Department of Physics, Faculty of Science, University of Hradec Kr\'alov\'e, Rokitansk\'eho 62,
500\,03 Hradec Kr\'alov\'e, Czechia}
\email{jiri.lipovsky@uhk.cz}

\author{Tom\'a\v{s} Mach\'a\v{c}ek} 
\address{Department of Physics, Faculty of Science, University of Hradec Kr\'alov\'e, Rokitansk\'eho 62,
500\,03 Hradec Kr\'alov\'e, Czechia}
\email{Machy.17@seznam.cz}

\date{\today}

\begin{abstract}
The spectral determinant is usually defined using the spectral zeta function that is meromorphically continued to zero. In this definition, the complex logarithms of the eigenvalues appear. Hence the notion of the spectral determinant depends on the way how one chooses the branch cut in the definition of the logarithm. We give results for the non-self-adjoint operators that state when the determinant can and cannot be defined and how its value differs depending on the choice of the branch cut.
\end{abstract}
\maketitle

\smallskip
\noindent \emph{Keywords:} spectral determinant; branch cut; spectral zeta function.

\section{Introduction}
When studying the spectral properties of operators, various notions can be investigated. One of them is the spectral (functional) determinant, corresponding to the product of eigenvalues. It can be viewed as a generalization of the determinant for a square matrix (an operator with finitely many eigenvalues). Since for most of the interesting operators the product of their eigenvalues is not convergent, one defines it using the spectral zeta function, a function of a complex variable $s$ defined through an infinite sum that is usually convergent in a certain half-plane in $s$ (for details see Section~\ref{sec:def}). Using the fact that the spectral zeta function can be uniquely meromorphically continued to the rest of the complex plane in $s$, allows us to assign the unique value to its derivative at $s=0$. This value is then used for the definition of the determinant.

The above definition of the spectral determinant can be traced back to the works of Minakshisundaram and Pleijel \cite{MP49} and Ray and Singer \cite{RS71}. Since then, results for various operators have been obtained. Without claiming that the list of works is complete, we mention e.g. the papers on the determinant for the Sturm-Liouville operators \cite{LS77, GK19, ACF20}, Dirichlet Laplacians on balls or polygons \cite{BGKE96, AS94}, or harmonic and anharmonic oscillators \cite{Fre18, Vor80}. An important application of the spectral determinants can be found in string theory or quantum field theory (see, \cite{Dun08} and references therein). A result for more general elliptic operators obtained in \cite{BFK95} was applied for the damped wave equation in \cite{FL19} or for the polyharmonic operator in \cite{FL20}.

When defining the spectral determinant using the spectral zeta function, complex logarithms of the eigenvalues appear (for more details see Section~\ref{sec:def}). However, the complex logarithm is not a unique function. When one wants to define it as a unique function, one must choose a certain branch -- an interval of the width $2\pi$ from which the arguments of the eigenvalues are taken. As it was already mentioned in \cite{QHS93, FL19}, the choice of the branch may influence the value of the determinant. The aim of the current note is to shed some light on this problem. For various distributions of the eigenvalues in the complex plane and different choices of the branch cut, we find how the determinant changes when altering the branch cut.

The result in \cite{FL19} obtained for the linear distribution of the eigenvalues on the imaginary axis is generalized in two ways: we allow for multiple rays along which the eigenvalues are distributed and we generalize the result to power growth. For this setting, we prove that the determinant changes the sign when the branch cut crosses one of the rays on which the eigenvalues are distributed. Moreover, we prove that for the exponential and the logarithmic growth the spectral determinant cannot be reasonably defined. Finally, we study the distribution of the eigenvalues on a line not going through the origin and we compare the results to the previous results on the damped wave equation.

The paper is structured as follows. In the second section, we properly define all notions used in the paper. In Section~\ref{sec:dwe}, we give an introductory example showing how the choice of the branch cut influences the determinant. Section~\ref{sec:results} gives the main results of the paper; we compare the determinants for different branch cuts and various distributions of the eigenvalues.

\section{Definition and preliminaries}\label{sec:def}
Throughout this paper, we will assume an operator $A$ with discrete spectrum. Operator $A$ can be in general non-self-adjoint and thus its eigenvalues may not be real. To define the spectral determinant for this operator, we have to introduce the spectral zeta function first, which is a function of the complex parameter $s$.

\begin{definition}\label{def:zeta}
The spectral zeta function of the operator $A$ is
\begin{equation}
  \zeta_A(s) = \sum_{j=1}^\infty \lambda_j^{-s}\,,\label{eq:zeta}
\end{equation}
where $\lambda_j$'s are the eigenvalues of the operator $A$.
\end{definition}

This definition is a generalization of the Riemann zeta function $\zeta_\mathrm{R}(s) = \sum_{j=1}^\infty j^{-s}$. We stress that, similarly to the Riemann zeta function, the sum in the spectral zeta function is typically not convergent for all complex $s$. However, for the most common operators (as, for instance, the Sturm-Liouvile operators) the sum is typically convergent in the half-plane $\mathrm{Re}\,s>c$. 

Since it will be used in Subsection~\ref{sec:outside}, we also introduce the Hurwitz zeta function.
\begin{definition}
Hurwitz zeta function is a function of two complex parameters $s$ and $a$ defined by the formula
$$
  \zeta_\mathrm{H}(s,a) = \sum_{j=0}^\infty \frac{1}{(j+a)^s}\,.
$$  
\end{definition}

Now we can define the spectral determinant.

\begin{definition}\label{def:det}
The spectral determinant for the operator $A$ is defined as
\begin{equation}
  \mathrm{det}\,A = \mathrm{exp}(-\zeta_A'(0))\,, \label{eq:detdef}
\end{equation}
where prime denotes the complex derivative of the zeta function with respect to $s$.
\end{definition}

Notice that $s=0$ is the point where the sum \eqref{eq:zeta} is not convergent, as it consists of infinitely many ones. However, one can bypass this issue if the sum is properly defined in the above-mentioned half-plane. We use the fact that the function can be uniquely meromorphically continued from the half-plane to the rest of the complex plane and the needed derivative at zero is computed using this continuation. 

For the reader's convenience, we introduce the well-known notions from complex analysis (see, e.g. \cite{Rud86,Ahl79}).

\begin{definition}
We say that a complex function of the complex variable $f(z)$ is \emph{holomorphic} in an open set $\Omega\subset \mathbb{C}$ if there exists its complex derivative $f'(z)$ for every $z_0 \in \Omega$. The function which is holomorphic in $\Omega$ up to the set of isolated points is called \emph{meromorphic}. The function is \emph{complex analytic} at $z_0$ if it is infinitely many times differentiable and it is equal to its Taylor series in the neighbourhood of $z_0$.
\end{definition}

\begin{theorem}
For functions on an open ball, the function is holomorphic if and only if it is analytic.
\end{theorem}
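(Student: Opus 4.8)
\emph{Proof proposal.} The plan is to prove the two implications separately. The direction ``analytic $\Rightarrow$ holomorphic'' is the easy one: if $f$ agrees in a neighbourhood of each point $z_0$ of the ball with a power series $\sum_{n\ge 0} a_n (z-z_0)^n$, then a standard term-by-term differentiation argument — valid because a power series and its formal derivative have the same radius of convergence — shows that $f'(z_0)$ exists (and equals $a_1$). Hence $f$ is holomorphic on the whole ball.

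For the converse, ``holomorphic $\Rightarrow$ analytic'', I would argue via the Cauchy integral formula. Fix $z_0$ in the open ball $B$ and choose $r>0$ so small that the closed disk $\overline{D(z_0,r)}$ lies in $B$. The first main step is to establish Cauchy's theorem and the Cauchy integral formula on this disk: writing $\gamma$ for the positively oriented circle of radius $r$ about $z_0$, one has, for every $z$ with $|z-z_0|<r$,
\[
  f(z) = \frac{1}{2\pi i}\int_\gamma \frac{f(\zeta)}{\zeta-z}\,d\zeta\,.
\]

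The second main step is the series expansion. For a fixed such $z$ and for $\zeta$ on $\gamma$ one has $|z-z_0|/|\zeta-z_0| = |z-z_0|/r < 1$, a bound independent of $\zeta$, so the geometric series
\[
  \frac{1}{\zeta-z} = \frac{1}{(\zeta-z_0)-(z-z_0)} = \sum_{n=0}^\infty \frac{(z-z_0)^n}{(\zeta-z_0)^{n+1}}
\]
converges uniformly in $\zeta\in\gamma$; since $f$ is continuous, hence bounded, on the compact set $\gamma$, the Weierstrass $M$-test justifies interchanging the sum with the integral. Substituting into the Cauchy integral formula gives
\[
  f(z) = \sum_{n=0}^\infty \left(\frac{1}{2\pi i}\int_\gamma \frac{f(\zeta)}{(\zeta-z_0)^{n+1}}\,d\zeta\right)(z-z_0)^n\,,
\]
a convergent power series representation of $f$ near $z_0$; differentiating it repeatedly identifies the $n$-th coefficient with $f^{(n)}(z_0)/n!$, so this is precisely the Taylor series of $f$ at $z_0$. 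As $z_0\in B$ was arbitrary, $f$ is analytic on $B$.

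I expect the main obstacle to be the first step: proving the Cauchy integral formula without assuming continuity of $f'$ in advance. The clean route is Goursat's lemma (Cauchy's theorem for the boundary of a triangle), which on the convex disk yields a local primitive of $f$, hence Cauchy's theorem for the circle $\gamma$, and then the integral formula follows by the usual contour-deformation and limiting argument handling the singularity at $\zeta=z$. Everything after that — the geometric expansion, the uniform-convergence interchange, and the identification of the coefficients — is routine.
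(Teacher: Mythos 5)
Your outline is correct and is precisely the standard argument (Goursat's lemma on a convex set giving a primitive, hence the Cauchy integral formula, then the geometric-series expansion), which is the route taken in the textbooks the paper cites for this preliminary; the paper itself states the result without proof as a well-known fact. Nothing is missing beyond the routine details you already flag.
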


\begin{theorem}
Two functions that are complex analytic in $\tilde \Omega$ and coincide on some set with an accumulation point in $\tilde \Omega$ are identical. 
\end{theorem}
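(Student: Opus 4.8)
The plan is to reduce the statement to the classical identity theorem. Put $h=f-g$; since a difference of functions each equal to its Taylor series in a neighbourhood of every point is again of this form, $h$ is complex analytic in $\tilde\Omega$, and by hypothesis $h$ vanishes on a set $S\subseteq\tilde\Omega$ having an accumulation point $z_0\in\tilde\Omega$. The goal is to show $h\equiv 0$ on $\tilde\Omega$, which gives $f\equiv g$. I will tacitly assume $\tilde\Omega$ is a domain (connected open set), as is standard in this context; otherwise the argument only yields $f=g$ on the connected component of $z_0$.

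The first step is to prove that $h$ vanishes identically near $z_0$. Expand $h(z)=\sum_{n\ge 0}a_n(z-z_0)^n$ on a ball $B(z_0,r)\subseteq\tilde\Omega$. Suppose not all $a_n$ vanish and let $m$ be the smallest index with $a_m\neq 0$. Then $h(z)=(z-z_0)^m\phi(z)$ on $B(z_0,r)$, where $\phi(z)=\sum_{k\ge 0}a_{m+k}(z-z_0)^k$ is analytic with $\phi(z_0)=a_m\neq 0$. By continuity, $\phi$ is nonzero on a smaller ball $B(z_0,\rho)$, hence $h(z)\neq 0$ for all $z$ with $0<|z-z_0|<\rho$. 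This contradicts $z_0$ being an accumulation point of $S$ (pick a sequence in $S\setminus\{z_0\}$ tending to $z_0$). Therefore $a_n=0$ for all $n$, so $h\equiv 0$ on $B(z_0,r)$; in particular every derivative $h^{(n)}$ vanishes at $z_0$.

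The second step is propagation by connectedness. Let $U=\{z\in\tilde\Omega:\ h^{(n)}(z)=0\ \text{for all } n\ge 0\}$. It is nonempty by the first step. It is closed in $\tilde\Omega$, being the intersection of the closed sets $\{h^{(n)}=0\}$ (each $h^{(n)}$ is continuous, since analytic). It is open: if $z\in U$, the Taylor series of $h$ centred at $z$ is identically zero, so $h$ vanishes on a whole ball about $z$, which then lies in $U$. As $\tilde\Omega$ is connected, $U=\tilde\Omega$, hence $h=h^{(0)}\equiv 0$ on $\tilde\Omega$, i.e. $f\equiv g$.

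The only step with genuine content is the first one, converting ``accumulation point of coincidence'' into ``all Taylor coefficients vanish at $z_0$''; it rests on the finite-order factorization of an analytic function at a zero and the continuity of the nonvanishing factor. The second step is a routine clopen argument, the one caveat being the need for $\tilde\Omega$ to be connected for the conclusion to hold globally.
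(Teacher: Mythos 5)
Your proof is correct; the paper states this identity theorem without proof, simply citing standard references (Rudin, Ahlfors), and your argument is exactly the classical textbook proof those sources give: vanishing of all Taylor coefficients at the accumulation point via the finite-order factorization, followed by the clopen propagation argument. You are also right to flag that $\tilde\Omega$ must be connected for the global conclusion --- a hypothesis the paper's statement leaves implicit.
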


Hence if we have a meromorphic function in an open set $\Omega$, we can meromorphically continue it to the whole complex plane. The zeta function in the Definition~\ref{def:det} is thus understood as the unique meromorphic continuation to zero from the above-mentioned set where the sum in the definition of the zeta function converges.

At the end of this section, let us mention a property that will be important in the following sections. There appears the term $\lambda_j^{-s}$ in the definition of the spectral zeta function. Let us stress that both $\lambda_j$ and $s$ are complex and that the expression can be rewritten as $\mathrm{exp\,}(-s \ln{(\lambda_j)})$. Hence the spectral zeta function (and through it also the spectral determinant) is dependent on the definition of the complex logarithm of the eigenvalues. The complex logarithm can be regarded as a multivalued function with the imaginary part having values that can differ by multiples of $2\pi$. When one wants to define the logarithm as a single-valued function, one must specify the interval of the width $2\pi$ and take the arguments of the numbers in the argument of the logarithm from that interval. Thus one chooses one particular \emph{branch} of the logarithm. Then the ray in the complex plane where the arguments are discontinuous is called the \emph{branch cut}.

As it was found in \cite{FL19} and earlier in \cite[Ex.~11]{QHS93}, the choice of the branch cut of the logarithm can influence the value of the spectral determinant. In the next sections, we will investigate this issue deeper and try to elucidate under which conditions the spectral determinant changes and how.

\section{Example -- damped wave equation}\label{sec:dwe}
In this section, we will reproduce an example from \cite{FL19}. In the mentioned paper, the spectral determinant for the damped wave equation on an interval of the length $T$ was studied. On this interval, the equation
\begin{equation}\label{eq:dwe1}
  \frac{\partial^2 v(t,x)}{\partial t^2} + 2 a(x) \frac{\partial v(t,x)}{\partial t} = \frac{\partial^2 v(t,x)}{\partial x^2}\,,
\end{equation}
is investigated, where $a(x)\in \mathcal{C}([0,T])$ is the damping function. The Dirichlet boundary conditions $v(0) = v(T) = 0$ and certain initial conditions are assumed. The problem can be rewritten into the form of a matrix equation
\begin{equation}\label{eq:dwe2}
    \frac{\partial}{\partial t} \begin{pmatrix}v_0(t,x)\\v_1(t,x)\end{pmatrix} = \begin{pmatrix}0 & 1 \\ \frac{\partial^2}{\partial x^2}& -2 a(x)\end{pmatrix}\begin{pmatrix}v_0(t,x)\\v_1(t,x)\end{pmatrix}\,.
\end{equation}
It is easy to find that after the substitution for $v_1$, the variable $v_0$ satisfies \eqref{eq:dwe1}. Furthermore, the ansatz $v_0 (t,x) = \mathrm{e}^{\lambda t} u_0(x)$, $v_1 (t,x) = \mathrm{e}^{\lambda t} u_1(x)$ leads to the alternative formulation of the problem -- finding the eigenvalues for the matrix operator
$A_{\mathrm{DWE}} = \begin{pmatrix}0 & 1 \\ \frac{\partial^2}{\partial x^2}& -2 a(x)\end{pmatrix}$. The eq.~\eqref{eq:dwe2} translates to
$$
  A_{\mathrm{DWE}}  \begin{pmatrix}u_0(x)\\ u_1(x)\end{pmatrix} = \lambda \begin{pmatrix}u_0(x)\\ u_1(x)\end{pmatrix}\,.$$

The spectral determinant for the operator $A_{\mathrm{DWE}}$ was found in \cite{FL19} and it was proven that it does not depend on the damping. 
The effect of the choice of the branch cut that is found in \cite{FL19} is visible already for the case without damping ($a(x) \equiv 0$), i.e. the operator with the eigenvalues $\lambda_{j\pm} = \pm \frac{j\pi}{T}i$, $j\in \mathbb{N}$. We will study the spectral determinant for the branch cuts in the negative and the positive real axis. For the former, the interval of the arguments of the eigenvalues will be chosen from the interval $(-\pi, \pi)$, and the eigenvalues can be rewritten as 
$$
  \lambda_{j+} = \frac{j\pi}{T} \mathrm{e}^{i\frac{\pi}{2}}\,,\quad \lambda_{j-} = \frac{j\pi}{T} \mathrm{e}^{-i\frac{\pi}{2}}\,.
$$
The spectral zeta function can be then written as
\begin{eqnarray}
  \zeta_A(s) &=& \sum_{j=1}^\infty \left[\left(\frac{j\pi}{T} \mathrm{e}^{i\frac{\pi}{2}}\right)^{-s}+\left(\frac{j\pi}{T} \mathrm{e}^{-i\frac{\pi}{2}}\right)^{-s}\right]\nonumber\\
             &=& \sum_{j=1}^\infty \left(\frac{j\pi}{T}\right)^{-s}(\mathrm{e}^{-i\frac{\pi}{2}s}+\mathrm{e}^{i\frac{\pi}{2}s}) \nonumber\\
             &=& \sum_{j=1}^\infty 2\left(\frac{j\pi}{T}\right)^{-s}\cos{\left(\frac{\pi s}{2}\right)}\nonumber\\
             &=& 2\mathrm{e}^{s\log{\left(\frac{T}{\pi}\right)}}\cos{\left(\frac{\pi s}{2}\right)} \zeta_\mathrm{R}(s)\,, \label{eq:dweleft}            
\end{eqnarray}
where $\zeta_\mathrm{R}$ is the Riemann zeta function.

The Riemann zeta function is well defined in the half-plane $\mathrm{Re\,}s > 1$ and thus can be meromorphically continued to the rest of the complex plane with the known values 
\begin{eqnarray}
 \zeta_{\mathrm{R}}(0) &=& -\frac{1}{2}\,,\label{eq:zeta0} \\
 \zeta'_{\mathrm{R}}(0) &=& -\frac{1}{2}\log{(2\pi)}\,.\label{eq:zetaprime0}
\end{eqnarray}
Differentiating \eqref{eq:dweleft} and using \eqref{eq:zeta0} and \eqref{eq:zetaprime0} we get
$$
  \zeta'_A(0) = 2\log{\frac{T}{\pi}}\zeta_{\mathrm{R}}(0)+2\zeta'_\mathrm{R}(0) = -\log{(2T)}\,.
$$
Hence the determinant is 
$$
  \mathrm{det}\,A = \mathrm{e}^{\log{(2T)}} = 2T\,.
$$

For the choice of the cut on the positive real axis, one has to take the arguments of the eigenvalues from the interval $(0,2\pi)$. Hence we plug into the formula for the zeta function the values 
$$
  \lambda_{j+} = \frac{j\pi}{T} \mathrm{e}^{i\frac{\pi}{2}}\,,\quad \lambda_{j-} = \frac{j\pi}{T} \mathrm{e}^{i\frac{3\pi}{2}}
$$
obtaining by similar manipulations as above the result
$$
    \zeta_A(s) = 2 \mathrm{e}^{-i \pi s} \mathrm{e}^{s\log{\left(\frac{T}{\pi}\right)}}\cos{\left(\frac{\pi s}{2}\right)} \zeta_\mathrm{R}(s)\,.           
$$
Thus the derivative at zero differs by the factor of $i\pi $.
$$
    \zeta'_A(0) =-2 i \pi \zeta_\mathrm{R}(0) + 2\log{\frac{T}{\pi}}\zeta_{\mathrm{R}}(0)+2\zeta'_\mathrm{R}(0) = i \pi -\log{(2T)}\,.           
$$
This results in the spectral determinant that has a different sign.
$$
  \mathrm{det}\,A = \mathrm{e}^{-i\pi+\log{(2T)}} = -2T\,.
$$

This result shows that the change of the branch cut can in some cases cause the difference in the spectral determinant, for instance, in its sign. The analysis provided in this section illustrates that to get a proper value of the spectral determinant, one must clearly specify which branch of the logarithm is assumed in its definition. Even in this simple example inspired by a physical problem, the value of the spectral determinant differs; this drives us to investigate this problem further and find how the determinant changes in different geometrical settings of the eigenvalues of the problem.

\section{Results}\label{sec:results}
In this section, we try to generalize this result for different and more general types of operators, or in other words, distributions of eigenvalues. We start with a result already mentioned in \cite{FL19} concerning the shift of the branch cut through finitely many eigenvalues.

\begin{theorem}\label{thm:finitely}
If the branch cut of the logarithm moves so that it crosses finitely many eigenvalues of the operator $A$, its spectral determinant does not change.
\end{theorem}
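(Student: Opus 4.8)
The plan is to compare the two spectral zeta functions --- the original $\zeta_A$ and the function $\tilde\zeta_A$ obtained after the shift of the branch cut --- and to show that their difference is an entire function whose derivative at $s=0$ is an integer multiple of $2\pi i$. First I would fix the finite set $F$ of eigenvalues that the branch cut sweeps across during the motion. For $k\notin F$ the argument of $\lambda_k$ entering $\lambda_k^{-s}=\mathrm{exp}(-s\ln\lambda_k)$ is unchanged, whereas for $k\in F$ it changes by $2\pi n_k$ with some $n_k\in\mathbb{Z}$ (the sign depends on the direction in which the cut passes $\lambda_k$, and $n_k$ equals the net signed number of crossings if the cut moves back and forth). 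Hence on the half-plane $\mathrm{Re}\,s>c$ where both series converge,
\begin{equation*}
  \tilde\zeta_A(s)-\zeta_A(s)=\sum_{k\in F}\lambda_k^{-s}\bigl(\mathrm{e}^{2\pi i n_k s}-1\bigr)\,.
\end{equation*}

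Then I would note that the right-hand side is a finite sum of products of entire functions of $s$, hence entire. Since $\zeta_A$ and $\tilde\zeta_A$ both admit meromorphic continuations to $\mathbb{C}$ coinciding with the respective convergent series on $\mathrm{Re}\,s>c$, the identity theorem recalled in Section~\ref{sec:def} forces the displayed identity to hold for the continuations on all of $\mathbb{C}$. In particular $s=0$ is not a pole of $\tilde\zeta_A$ --- it is none of $\zeta_A$, as $\mathrm{det}\,A$ is assumed to be defined --- so $\tilde\zeta_A'(0)$ makes sense.

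Next I would differentiate the identity at $s=0$. Each summand $g_k(s)=\lambda_k^{-s}(\mathrm{e}^{2\pi i n_k s}-1)$ vanishes at $s=0$, and by the product rule $g_k'(0)=2\pi i n_k$, since the term carrying the derivative of $\lambda_k^{-s}$ is annihilated by the factor $\mathrm{e}^{2\pi i n_k s}-1$ at $s=0$. Therefore $\tilde\zeta_A'(0)=\zeta_A'(0)+2\pi i N$ with $N:=\sum_{k\in F}n_k\in\mathbb{Z}$, and
\begin{equation*}
  \mathrm{det}\,\tilde A=\mathrm{e}^{-\tilde\zeta_A'(0)}=\mathrm{e}^{-\zeta_A'(0)}\,\mathrm{e}^{-2\pi i N}=\mathrm{e}^{-\zeta_A'(0)}=\mathrm{det}\,A\,,
\end{equation*}
which is the assertion.

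The only genuinely delicate point is the bookkeeping of the arguments: one must verify that pushing a branch cut past an eigenvalue changes its argument by exactly $\pm 2\pi$ while leaving every other argument untouched, so that the difference of the two zeta functions is indeed supported on the finite set $F$ and the exponents $n_k$ are integers. Once this is granted, the remaining steps --- the entireness of the difference, the analytic-continuation step, and the Taylor expansion at $s=0$ --- are routine, and the integrality of $N$ closes the argument via $\mathrm{e}^{-2\pi i N}=1$.
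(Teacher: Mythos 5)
Your proof is correct. The paper itself states Theorem~\ref{thm:finitely} without giving a proof (it defers to \cite{FL19}); your argument --- that on the common half-plane of convergence the two zeta functions differ by the finite, entire sum $\sum_{k\in F}\lambda_k^{-s}\bigl(\mathrm{e}^{2\pi i n_k s}-1\bigr)$, that this identity persists under meromorphic continuation, and that each summand vanishes at $s=0$ with derivative $2\pi i n_k$, so $\zeta_A'(0)$ shifts by $2\pi i N$ with $N\in\mathbb{Z}$ and $\mathrm{e}^{-2\pi i N}=1$ --- is exactly the standard argument one expects here and supplies the missing proof cleanly.
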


We generalize the result of Sec.~\ref{sec:dwe} in two directions. First, the eigenvalue distances from the origin grow at a different rate (as a power of $j$, exponentially or logarithmically). Secondly, we allow for more rays on which the eigenvalues are situated.

\subsection{Power growth}\label{sec:power}
One of the main results of the paper is the following theorem concerning the power growth of the eigenvalues.

\begin{theorem}\label{thm:power}
Let us have finite number of angles $\alpha_\ell$, such that $0<\alpha_1<\alpha_2<\dots<\alpha_N<2\pi$. Let us assume that the eigenvalues of the operator $A$ are
\begin{eqnarray*}
\lambda_j^{(1)} = c_1 j^{c_2} \mathrm{e}^{i\alpha_1}\,,\\
\lambda_j^{(2)} = c_1 j^{c_2} \mathrm{e}^{i\alpha_2}\,,\\
\vdots\\
\lambda_j^{(N)} = c_1 j^{c_2} \mathrm{e}^{i\alpha_N}\,,
\end{eqnarray*}
(see Figure~\ref{fig:power}), where $c_1, c_2 \in \mathbb{R}_+$. If the branch cut is moved so that it passes $n$ half-lines with angles $\alpha_i$, the determinant will be $(-1)^n$-multiple of the former determinant.
\end{theorem}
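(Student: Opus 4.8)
The plan is to compute the spectral zeta function explicitly, as in Section~\ref{sec:dwe}, but now with $N$ rays and a general power law $\lambda_j^{(\ell)} = c_1 j^{c_2} \mathrm{e}^{i\alpha_\ell}$. For a fixed branch cut with angle $\theta$, each eigenvalue $\lambda_j^{(\ell)}$ gets assigned an argument $\tilde\alpha_\ell(\theta) \in (\theta, \theta+2\pi)$ congruent to $\alpha_\ell$ modulo $2\pi$; that is, $\tilde\alpha_\ell = \alpha_\ell$ or $\tilde\alpha_\ell = \alpha_\ell \pm 2\pi$ depending on where the cut sits. Then
\begin{equation*}
  \zeta_A(s) = \sum_{\ell=1}^N \sum_{j=1}^\infty \bigl(c_1 j^{c_2} \mathrm{e}^{i\tilde\alpha_\ell}\bigr)^{-s} = c_1^{-s} \left(\sum_{\ell=1}^N \mathrm{e}^{-i s \tilde\alpha_\ell}\right) \zeta_\mathrm{R}(c_2 s)\,,
\end{equation*}
since $\sum_j (j^{c_2})^{-s} = \sum_j j^{-c_2 s} = \zeta_\mathrm{R}(c_2 s)$, which converges for $\mathrm{Re}\,s > 1/c_2$ and continues meromorphically to all of $\mathbb{C}$. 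First I would establish this closed form carefully, noting that the finite sum over $\ell$ and the factor $c_1^{-s}$ are entire in $s$, so the meromorphic continuation of $\zeta_A$ is just the product of these entire factors with $\zeta_\mathrm{R}(c_2 s)$.

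Next I would differentiate at $s=0$. Writing $F(s) = c_1^{-s} \sum_{\ell=1}^N \mathrm{e}^{-i s \tilde\alpha_\ell}$, we have $F(0) = N$ and $F'(0) = -\log c_1 \cdot N - i\sum_{\ell=1}^N \tilde\alpha_\ell$. Using $\zeta_\mathrm{R}(0) = -\tfrac12$ and $\zeta_\mathrm{R}'(0) = -\tfrac12\log(2\pi)$ together with the chain rule for $\zeta_\mathrm{R}(c_2 s)$, the product rule gives
\begin{equation*}
  \zeta_A'(0) = F'(0)\,\zeta_\mathrm{R}(0) + F(0)\, c_2\, \zeta_\mathrm{R}'(0) = \tfrac12 N \log c_1 + \tfrac{i}{2}\sum_{\ell=1}^N \tilde\alpha_\ell - \tfrac12 N c_2 \log(2\pi)\,.
\end{equation*}
The only term sensitive to the branch cut is $\tfrac{i}{2}\sum_\ell \tilde\alpha_\ell$. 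When the cut is moved so that it crosses the half-line with angle $\alpha_i$ (say, sweeping it from one side to the other), the assigned argument $\tilde\alpha_i$ jumps by $\pm 2\pi$ while all other $\tilde\alpha_\ell$ are unaffected; hence $\sum_\ell \tilde\alpha_\ell$ changes by $\pm 2\pi$, so $\zeta_A'(0)$ changes by $\pm i\pi$, and $\det A = \exp(-\zeta_A'(0))$ picks up a factor $\mathrm{e}^{\mp i\pi} = -1$. Crossing $n$ such half-lines (each exactly once) therefore multiplies the determinant by $(-1)^n$, which is the claim; I would also remark that crossing a ray that carries no eigenvalues changes nothing, consistent with Theorem~\ref{thm:finitely} when only finitely many eigenvalues lie on a crossed segment.

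The main obstacle is bookkeeping rather than analysis: one must be careful about the orientation convention for "moving the branch cut across a half-line," confirm that each crossing shifts exactly one $\tilde\alpha_\ell$ by exactly $2\pi$ (and in a consistent direction for a given sweep), and handle the degenerate possibility that the cut coincides with one of the $\alpha_\ell$ (excluded by continuity/choosing the cut generically, or handled by a limiting argument). A secondary point worth stating explicitly is the justification that interchanging the finite $\ell$-sum with the meromorphic continuation in $s$ is legitimate — this is immediate because there are only finitely many rays, so $\zeta_A$ is literally a finite linear combination of terms $c_1^{-s}\mathrm{e}^{-is\tilde\alpha_\ell}\zeta_\mathrm{R}(c_2 s)$, each of which is meromorphic on all of $\mathbb{C}$ with a single pole at $s = 1/c_2$ (in particular holomorphic at $s=0$). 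Everything else is the routine derivative computation sketched above.
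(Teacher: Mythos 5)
Your proposal is correct and follows essentially the same route as the paper: both write $\zeta_A(s)$ as $c_1^{-s}\zeta_\mathrm{R}(c_2 s)$ times a finite sum of exponentials carrying the branch-dependent arguments, use $\zeta_\mathrm{R}(0)=-\tfrac12$ and $\zeta_\mathrm{R}'(0)=-\tfrac12\log(2\pi)$, and observe that crossing a ray shifts one assigned argument by $2\pi$, hence $\zeta_A'(0)$ by $i\pi$ and the determinant by $-1$. Your notation $\tilde\alpha_\ell(\theta)$ merely streamlines the paper's explicit interleaving of cut angles $\beta_\ell$ between the rays; the computation is identical.
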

\begin{figure}
\centering
  \includegraphics[width=8cm]{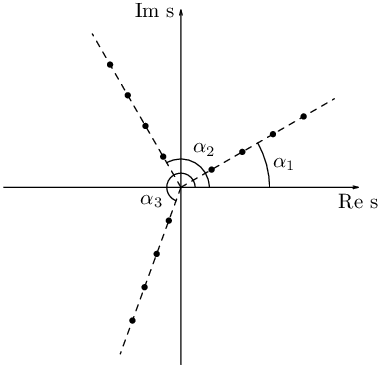}
  \caption{Power growth of the eigenvalues on more half-lines.}
  \label{fig:power}
\end{figure}%
\begin{proof}
Let us define angles $\beta_\ell$, $\ell=1,\dots, N+1$ such that $0<\beta_1<\alpha_1<\beta_2<\alpha_2<\beta_3<\dots<\beta_N<\alpha_N<\beta_{N+1}<2\pi$ and assume the branch cuts on the half-lines under the angles $\beta_\ell$. For a particular branch cut under the angle $\beta_\ell$, the arguments of the eigenvalues will be in the interval $(\beta_\ell,\beta_\ell+2\pi)$. It is clear that the arguments of the eigenvalues in the particular rays are successively $\alpha_1+2\pi$, $\alpha_2+2\pi$, \dots , $\alpha_{\ell-1}+2\pi$, $\alpha_{\ell}$,  $\alpha_{\ell+1}$, \dots ,  $\alpha_{N}$. The spectral zeta function is 
\begin{eqnarray*}
  \zeta_A (s) &=& \sum_{j=1}^\infty (c_1 j^{c_2})^{-s} [\mathrm{e}^{-i \alpha_1 s- 2\pi i s}+\mathrm{e}^{-i \alpha_2 s- 2\pi i s}+\dots + \mathrm{e}^{-i \alpha_{\ell-1} s- 2\pi i s}+ \mathrm{e}^{-i \alpha_{\ell} s}+\dots + \mathrm{e}^{-i \alpha_{N} s}]\\
  &=& \mathrm{e}^{-s \log{c_1}} \zeta_{\mathrm{R}}(c_2 s) [\mathrm{e}^{-i \alpha_1 s- 2\pi i s}+\mathrm{e}^{-i \alpha_2 s- 2\pi i s}+\dots + \mathrm{e}^{-i \alpha_{\ell-1} s- 2\pi i s}+ \mathrm{e}^{-i \alpha_{\ell} s}+\dots + \mathrm{e}^{-i \alpha_{N} s}]\,.
\end{eqnarray*}
The sum is convergent for $\mathrm{Re}\,s>c_2^{-1}$, hence we can continue it to the rest of the complex plane. The derivative of the spectral zeta function at zero is 
\begin{eqnarray*}
 \zeta'_A (0) &=& -\log{(c_1)}\, \zeta_{\mathrm{R}}(0) N + c_2 \zeta'_{\mathrm{R}}(0) N + \zeta_{\mathrm{R}}(0) (-i)\left[2\pi (\ell-1)+\sum_{k=1}^N \alpha_k \right]\\
 &=& \frac{1}{2}N\log{(c_1)} -\frac{1}{2}N c_2\log{(2\pi)} + i\pi (\ell-1)+ i \frac{1}{2} \sum_{k=1}^N \alpha_k\,.
\end{eqnarray*}
The determinant is then according to \eqref{eq:detdef} equal to 
\begin{equation*}
 \mathrm{det}\,A = (-1)^{\ell-1} c_1^{-\frac{N}{2}} (2\pi)^{\frac{c_2 N}{2}} \mathrm{exp}\left[- \frac{i}{2} \sum_{k=1}^N \alpha_k\right]\,.
\end{equation*}
If the branch cut is changed so that it crosses $n$ rays of eigenvalues, i.e. we move from the index $\ell$ to the index $\ell+n$, the ratio of the determinants is clearly from the previous formula $(-1)^n$ and hence we have
$$
  \mathrm{det}\,A_{\ell+n} = \mathrm{det}\,A_{\ell} (-1)^n \,,
$$
where $\mathrm{det}\,A_{\ell}$ is the determinant for the branch cut under the angle $\beta_\ell$ and $\mathrm{det}\,A_{\ell+n}$ is the determinant for the branch cut under the angle $\beta_{\ell+n}$. 
\end{proof}

Let us stress that even for the cut on the same place in the complex plane but with the angle different by $2\pi$ the determinant does not have to be the same. If the number of rays of the eigenvalues is odd, the determinant changes the sign.

\subsection{Exponential growth}
This subsection is devoted to the exponential behaviour of the eigenvalues.

\begin{theorem}\label{thm:exponential}
If a ray of eigenvalues behaving as $\lambda_j = c_1 \mathrm{e}^{c_2 j}\mathrm{e}^{i\alpha}$, $c_1, c_2\in \mathbb{R}_+$ is present, the spectral determinant diverges to $+\infty$. 
\end{theorem}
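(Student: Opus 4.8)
The plan is to evaluate the contribution of this ray to the spectral zeta function explicitly -- it is a geometric series -- and to observe that the resulting function has a \emph{pole} at $s=0$, so that the regularization underlying Definition~\ref{def:det} breaks down.

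First I would fix the branch and write $\ln\lambda_j=\ln c_1+c_2 j+i\tilde\alpha$, where $\tilde\alpha$ is the argument of these eigenvalues in the chosen branch (so $\tilde\alpha\in\{\alpha,\alpha\pm2\pi,\dots\}$). Then $\lambda_j^{-s}=c_1^{-s}\mathrm{e}^{-i\tilde\alpha s}\mathrm{e}^{-c_2 j s}$, and summing the geometric series gives, for $\mathrm{Re}\,s>0$,
\begin{equation*}
  \zeta_{\mathrm{ray}}(s)=c_1^{-s}\mathrm{e}^{-i\tilde\alpha s}\sum_{j=1}^\infty \mathrm{e}^{-c_2 j s}=\frac{c_1^{-s}\mathrm{e}^{-i\tilde\alpha s}}{\mathrm{e}^{c_2 s}-1}\,,
\end{equation*}
the right-hand side being the meromorphic continuation to all of $\mathbb{C}$, with poles only at $s\in\frac{2\pi i}{c_2}\mathbb{Z}$.

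Next I would extract the local behaviour at $s=0$. The numerator is entire with value $1$ at the origin, whereas $\mathrm{e}^{c_2 s}-1$ has a simple zero there, so $\zeta_{\mathrm{ray}}$ has a simple pole at $s=0$ with residue $\lim_{s\to0}\frac{s}{\mathrm{e}^{c_2 s}-1}=c_2^{-1}>0$; concretely, using $\frac{1}{\mathrm{e}^{c_2 s}-1}=\frac{1}{c_2 s}-\frac12+\frac{c_2 s}{12}+\cdots$, one gets $\zeta_{\mathrm{ray}}(s)=\frac{1}{c_2 s}+a_0+a_1 s+O(s^2)$ and hence $\zeta'_{\mathrm{ray}}(s)=-\frac{1}{c_2 s^2}+a_1+O(s)$ near $s=0$. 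If one wants to allow the operator to carry further eigenvalues, assume they contribute a zeta function holomorphic at $s=0$ (which holds whenever the corresponding determinant itself is defined); adding it to $\zeta_{\mathrm{ray}}$ leaves a simple pole at $s=0$ with the same residue $c_2^{-1}$, so in all cases $\zeta'_A(s)=-\frac{1}{c_2 s^2}+O(1)$ near the origin.

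Finally, since the defining sum of $\zeta_A$ converges for $\mathrm{Re}\,s>0$, the quantity $\zeta'_A(0)$ in Definition~\ref{def:det} must be read as the limit $s\to0^+$ along the positive real axis, where $-\zeta'_A(s)=\frac{1}{c_2 s^2}-a_1+O(s)\to+\infty$ precisely because $c_2>0$. Therefore $\mathrm{det}\,A=\mathrm{exp}(-\zeta'_A(0))=+\infty$. The only point that needs care is the interpretation of the statement -- the determinant is not merely ill-defined but diverges with a definite sign -- and this is exactly what the positivity of the residue $c_2^{-1}$ provides; everything else is the routine geometric-series summation and the Bernoulli-type expansion quoted above.
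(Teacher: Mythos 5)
Your proposal is correct and follows essentially the same route as the paper: sum the geometric series to get $\zeta_{\mathrm{ray}}(s)=c_1^{-s}\mathrm{e}^{-i\tilde\alpha s}(\mathrm{e}^{c_2 s}-1)^{-1}$ and observe that the singularity at $s=0$ forces $\zeta'_A(s)\to-\infty$ and hence $\mathrm{det}\,A\to+\infty$. Your Laurent expansion pinning down the residue $c_2^{-1}>0$, and the explicit reading of the limit along $s\to0^+$, make the sign of the divergence more transparent than the paper's terse ``the limit of the derivative at zero is $-\infty$,'' but the underlying argument is identical.
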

\begin{proof}
We will restrict ourselves to the case when there is only the above-mentioned ray of eigenvalues, although the presence of other eigenvalues (either finitely many or infinitely many with power growth) does not influence the result. First, we write down the spectral zeta function.
\begin{eqnarray*}
  \zeta_A (s) &=& \sum_{j=1}^\infty c_1^{-s} \mathrm{e}^{-c_2 j s}\mathrm{e}^{-i\alpha s}\\
  &=& c_1^{-s} \mathrm{e}^{-i\alpha s}\sum_{j=1}^\infty \mathrm{e}^{-c_2 j s}\\
  &=& c_1^{-s} \mathrm{e}^{-i\alpha s} \frac{\mathrm{e}^{-c_2 s}}{1-\mathrm{e}^{-c_2 s}}\\
  &=& \mathrm{e}^{-s\log{(c_1)}} \mathrm{e}^{-i\alpha s} \frac{1}{\mathrm{e}^{c_2 s}-1}\,,\\  
\end{eqnarray*}
where we have used the sum for the geometric series that converges for $\mathrm{Re\,}s>0$. Its derivative is
\begin{eqnarray*}
  \zeta'_A (s) &=& -\log{c_1} \mathrm{e}^{-s\log{(c_1)}} \mathrm{e}^{-i\alpha s} \frac{1}{\mathrm{e}^{c_2 s}-1}-i\alpha \mathrm{e}^{-s\log{(c_1)}} \mathrm{e}^{-i\alpha s} \frac{1}{\mathrm{e}^{c_2 s}-1}\\
&&-c_2 \mathrm{e}^{-s\log{(c_1)}} \mathrm{e}^{-i\alpha s} \frac{\mathrm{e}^{c_2 s}}{(\mathrm{e}^{c_2 s}-1)^2}\,.\\  
\end{eqnarray*}
Both $\zeta_A$ and $\zeta'_A$ diverge as $s=0$; the limit of the derivative at zero is $-\infty$. Hence the spectral determinant diverges to $+\infty$.
\end{proof}

\subsection{Logarithmic growth}
In this subsection, we study the case when the eigenvalues on a ray grow logarithmically or slower. 

\begin{theorem}\label{thm:logarithmic}
Let $\log$ denote the natural logarithm. If a ray of eigenvalues behaving as $\lambda_j = \omega_j\mathrm{e}^{i\alpha}$, $\omega_j\in \mathbb{R}_+$, $\omega_j\leq c_1 \log{(c_2 j)}$, $c_1\in \mathbb{R}_+$, $c_2\geq 1$ is present, the spectral zeta function is not defined.
\end{theorem}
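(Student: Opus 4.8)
The plan is to show that the series defining $\zeta_A$ in \eqref{eq:zeta} fails to converge for every $s\in\mathbb{C}$, so that there is no half-plane of convergence from which a meromorphic continuation to $s=0$ could be produced; in the definitional framework of Section~\ref{sec:def} this is precisely what it means for the spectral zeta function to be undefined. Since all eigenvalues on the ray share the argument $\alpha$, one factors $\lambda_j^{-s}=\mathrm{e}^{-i\alpha s}\,\omega_j^{-s}$, whence $|\lambda_j^{-s}|=\mathrm{e}^{\alpha\,\mathrm{Im}\,s}\,\omega_j^{-\mathrm{Re}\,s}$ with the prefactor $\mathrm{e}^{\alpha\,\mathrm{Im}\,s}$ a positive constant independent of $j$; it therefore suffices to study $\sum_j\omega_j^{-\sigma}$ for $\sigma=\mathrm{Re}\,s$.

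First I would dispose of the case $\sigma\le 0$: because $A$ has discrete spectrum its eigenvalues cannot accumulate at a finite point, so $\omega_j\to\infty$, hence $\omega_j^{-\sigma}\not\to 0$ and the necessary condition for convergence of $\sum_j\lambda_j^{-s}$ already fails. For $\sigma>0$ I would combine the hypothesis $\omega_j\le c_1\log(c_2 j)$ with the elementary fact that $\bigl(c_1\log(c_2 j)\bigr)^{\sigma}=o(j)$ as $j\to\infty$ for every fixed $\sigma>0$: thus $\omega_j^{-\sigma}\ge\bigl(c_1\log(c_2 j)\bigr)^{-\sigma}\ge 1/j$ for all large $j$, and comparison with the harmonic series gives $\sum_j\omega_j^{-\sigma}=\infty$. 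Consequently $\sum_j|\lambda_j^{-s}|=\infty$ for every $s\in\mathbb{C}$; any eigenvalues lying off this ray would only contribute further nonnegative terms, so their presence does not change the conclusion.

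To upgrade ``no absolute convergence anywhere'' to ``no convergence anywhere'', I would invoke the standard fact that the domain of convergence of a generalized Dirichlet series $\sum_j\mathrm{e}^{-s\mu_j}$ with $\mu_j=\log\omega_j$ non-decreasing (which we may arrange by listing the eigenvalues in order of modulus) is a half-plane: if $\sum_j\lambda_j^{-s}$ converged at some $s_0$, it would converge at every real $\sigma>\max(0,\mathrm{Re}\,s_0)$, where the terms are positive and hence the convergence is absolute --- contradicting the previous paragraph. Therefore $\sum_j\lambda_j^{-s}$ diverges for all $s\in\mathbb{C}$, there is no half-plane in which $\zeta_A$ is given by a convergent sum, and the meromorphic-continuation procedure behind Definition~\ref{def:det} cannot even be started.

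The only genuinely delicate step is this last one, excluding conditional convergence, and even there the work amounts to citing the half-plane-of-convergence theorem for generalized Dirichlet series and verifying its hypotheses; the rest reduces to the two elementary estimates of the second paragraph. If one takes the absence of a half-plane of \emph{absolute} convergence to be already decisive within the present definition, that subtlety disappears as well.
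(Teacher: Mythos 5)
Your proof is correct and follows essentially the same route as the paper's: reduce to $\sum_j\omega_j^{-\sigma}$, bound it below via $\omega_j\le c_1\log(c_2j)$ by a divergent harmonic-type series (the paper phrases this comparison through the integral test, you do it by the equivalent estimate $(c_1\log(c_2j))^{\sigma}=o(j)$), and conclude there is no half-plane of convergence. Your extra paragraph ruling out conditional convergence via the half-plane theorem for generalized Dirichlet series is a welcome tightening of a point the paper passes over by restricting to real $s$, but it does not change the approach.
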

\begin{proof}
The spectral zeta function can be constructed similarly to the previous cases.
\begin{equation*}
  \zeta_A (s) = \mathrm{e}^{-i\alpha s} \sum_{j=1}^\infty \omega_j^{-s}\,.
\end{equation*}
Now we prove that $\sum_{j=1}^\infty \omega_j^{-s}$ diverges for all $s\in \mathbb{R}$ with $s>0$ (for $s\leq 0$ the claim is obvious).
\begin{equation*}
  \sum_{j=1}^\infty \omega_j^{-s} \geq c_1^{-s} \sum_{j=1}^\infty  (\log{(c_2 j)})^{-s} = c_1^{-s} \sum_{j=1}^\infty (\log{(c_2)}+\log{(j)})^{-s}\,.
\end{equation*}
The last sum diverges by the integral criterion. For any given $s>0$ there exists $c>1$ such that $\frac{1}{x}<\log{(x)}^{-s}$ for $x>c$. Moreover, there exists $\tilde c >0$ such that $(\log{(c_2)}+\log{(x)})^{-s}\geq (2\log{x})^{-s}\geq \tilde c (\log{x})^{-s}$ for $x>c$. Then
\begin{eqnarray*}
  \int_1^\infty (\log{(c_2)}+\log{(x)})^{-s} \,\mathrm{d}x &\geq &\int_c^\infty (\log{(c_2)}+\log{(x)})^{-s}\,\mathrm{d}x 
\\
&\geq &   \int_c^\infty \tilde c \log{(x)}^{-s} \,\mathrm{d}x > \int_c^\infty \frac{\tilde c}{x} \,\mathrm{d}x= \infty 
\end{eqnarray*}
and hence the sum diverges. Therefore, there is no half-plane $\mathrm{Re\,}s>\mathrm{const}.$ such that the zeta function is defined in this half-plane and hence the spectral determinant cannot be reasonably defined.
\end{proof}

\subsection{Eigenvalues on a vertical line outside the origin}\label{sec:outside}
The next example illustrates that the ratio of the determinants for different branch cuts is not always 1 or $-1$. We consider the example from Section~\ref{sec:dwe} with the eigenvalues shifted horizontally. The eigenvalues will be $\lambda_j = b+ij$, $j\in \mathbb{Z}\backslash\{0\}$ (see Figure~\ref{fig:outside}).  

\begin{figure}
\centering
  \includegraphics[width=8cm]{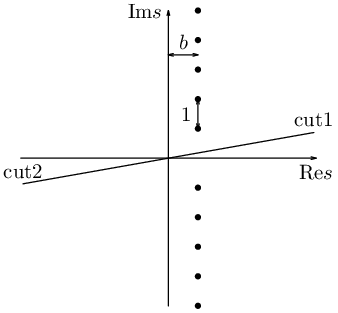}
  \caption{Eigenvalues on a vertical line outside the origin.}
  \label{fig:outside}
\end{figure}%

We will choose the branch cut in the following way. The first one will be on the positive real axis or just above it, hence the arguments of the eigenvalues will be taken from the interval $(0,2\pi)$. The second branch cut will be chosen on the negative real axis and the arguments of the complex numbers will be taken from the interval $(\pi, 3\pi)$. Note that the eigenvalues in the lower half-plane will have the same arguments for both choices of the branch cut. 

The spectral zeta functions can be written using the Hurwitz zeta functions in the following way.
\begin{equation}
  \zeta_1(s) = \mathrm{e}^{-\frac{\pi}{2}is} \zeta_\mathrm{H}(s,1-ib)+ \mathrm{e}^{-\frac{3\pi}{2}is}\zeta_\mathrm{H}(s,1+ib)\label{eq:ouside_zeta1}
\end{equation}
for the first choice of the branch cut and
\begin{equation}  
\zeta_2(s) = \mathrm{e}^{-\frac{5\pi}{2}is} \zeta_\mathrm{H}(s,1-ib)+ \mathrm{e}^{-\frac{3\pi}{2}is}\zeta_\mathrm{H}(s,1+ib)\label{eq:ouside_zeta2}
\end{equation}
for the second choice. For this construction, we have rotated the set of points $j\pm ib$, $j=1, \dots, \infty$ by corresponding angles. Note that in the definition of the Hurwitz zeta function, the sum goes from 0, while in the definition of the spectral zeta function it starts from 1. This results in the factor of 1 in the second argument of $\zeta_\mathrm{H}$.

Differentiating the expressions \eqref{eq:ouside_zeta1} and \eqref{eq:ouside_zeta2} one obtains
\begin{eqnarray*}
  \zeta_1'(0) = -\frac{\pi}{2}i \zeta_\mathrm{H}(0,1-ib) + \left.\frac{\partial \zeta_\mathrm{H}(s,1-ib)}{\partial s}\right|_{s=0}  - \frac{3\pi}{2}i \zeta_\mathrm{H}(0,1+ib)+ \left.\frac{\partial \zeta_\mathrm{H}(s,1+ib)}{\partial s}\right|_{s=0} \,,\\
  \zeta_2'(0) = -\frac{5\pi}{2}i \zeta_\mathrm{H}(0,1-ib) + \left.\frac{\partial \zeta_\mathrm{H}(s,1-ib)}{\partial s}\right|_{s=0}  - \frac{3\pi}{2}i \zeta_\mathrm{H}(0,1+ib)+ \left.\frac{\partial \zeta_\mathrm{H}(s,1+ib)}{\partial s}\right|_{s=0} \,,\\  
\end{eqnarray*}

Let us denote by $\mathrm{det}_1$ and $\mathrm{det}_2$ the determinant for the first and the second cut, respectively. The ratio of both determinants is 
\begin{equation}
  \frac{\mathrm{det\,}_1}{\mathrm{det}_2} = \mathrm{e}^{-\zeta_1'(0)+\zeta_2'(0)} = \mathrm{e}^{-2i \pi \zeta_\mathrm{H}(0,1-ib)}\,.\label{eq:outside:ratio1}
\end{equation}

Our final task is to find the value of the Hurwitz zeta function $\zeta(0,1-ib)$. We can use the following formula (see, e.g. \cite[eq. (1.10.7)]{EMOT53})
$$
  \zeta_\mathrm{H}(s,a) = \frac{1}{2} a^{-s}+\frac{a^{1-s}}{s-1}+2\int_0^\infty \frac{\sin{(s \arctan{(x/a)})}}{(a^2+x^2)^{s/2}(\mathrm{e}^{2\pi x}-1)}\,\mathrm{d}x\,.
$$
Applying this formula for $s=0$ and $a=1-ib$ one finds that the sine and hence also the integral vanish and we obtain
$$
  \zeta_\mathrm{H}(0,1-ib) = \frac{1}{2}+\frac{1-ib}{-1} = -\frac{1}{2}+ bi\,.
$$  
Substitution to \eqref{eq:outside:ratio1} yields
$$
  \frac{\mathrm{det\,}_1}{\mathrm{det}_2} = \mathrm{e}^{-2i \pi(-\frac{1}{2}+ bi)} = -\mathrm{e}^{2b\pi}\,.
$$

\begin{remark}
We can see that unlike in \cite{FL19} the determinant not only changes the sign, but the ratio of the determinants depends on the parameter $b$. Another example can be found in \cite[Ex. 11]{QHS93}. A more detailed analysis of the eigenvalues of our system and the damped wave equation in \cite{FL19} for the general value of the damping shows that, on one hand, the eigenvalue distribution looks very similar and, on the other hand, there are differences in the higher terms of the eigenvalue asymptotics. For the damped wave equation on the interval of length 1 with the damping $a(x)$, the large $j$ eigenvalue asymptotics is 
$$
  \lambda_j \approx \pi j i- \left<a\right> +\frac{\left<a^2\right>}{2\pi i j}+ \frac{1}{2\pi^2 j^2}[\left<a^3\right>-\left<a\right>\left<a^2\right>+\frac{1}{2}(a'(1)-a'(0))]+\dots 
$$
(see \cite{BF09}). The correct choice of the length of the interval supporting the damped wave equation (in particular, equal to $\pi$)  and $b=-\left<a\right>$ gives the same first two terms of the asymptotics as in our example. However, the higher-order terms differ. This results in different behaviour of the determinant; in \cite{FL19} the determinant did not depend on the damping, while in the present example, it depends exponentially. One can deduce that even small changes in the eigenvalue asymptotics can influence the spectral determinant.
\end{remark}

\section{Conclusions}
In this paper, we illustrated the subtleties of the spectral determinant and the spectral zeta function. The spectral zeta function, used for defining the spectral determinant, is a function of a complex variable $s$ and (in general infinitely many) complex eigenvalues $\lambda_j$. In the variable $s$, one may find a certain region in which the infinite sum in its definition converges, the zeta function is well-defined and may be meromorphically continued into the rest of the $s$-complex plane. On the other hand, the situation in the complex plane in which the eigenvalues $\lambda_j$ ``live'' is more complicated. The value of the spectral zeta function depends on the choice of the interval from which the arguments of the eigenvalues are taken. This may result in the discontinuities of the spectral zeta function in the $\lambda$-plane. If the branch cut moves through infinitely many eigenvalues (or, from the other perspective, if we perturb the eigenvalues so that they move through the branch cut), the spectral zeta function (and hence also the spectral determinant) may change.

Although this phenomenon was mentioned earlier in \cite{FL19, QHS93}, to the best of our knowledge, it has not been studied in detail previously. In the current paper, we gave new theorems~\ref{thm:power}, \ref{thm:exponential}, and \ref{thm:logarithmic} that find for eigenvalues on rays with power, exponential and logarithmic behaviour whether the zeta function and the spectral determinant can be defined and if yes, how the determinant changes when moving the branch cut. Moreover, in the Subsection~\ref{sec:outside} we study in detail the example of an operator with the eigenvalues on the line not including the origin. Such a distribution of eigenvalues is close to the distribution of eigenvalues of the damped wave equation on an interval. However, we show that the behaviour of the spectral determinants for both operators differs significantly. Our result manifests that the higher terms of the asymptotics of the eigenvalues for the damped wave equation are crucial for the behaviour of some of its spectral properties, e.g. the spectral determinant.

\section*{Acknowledgements}
The authors were supported by the Czech Science Foundation within the project 22-18739S. We thank the reviewer for useful remarks that improved the presentation of the paper.


\begin{thebibliography}{EMOT53}
\bibitem[ACF20]{ACF20}
C.~Aldana, J.-B.~Caillau, and P.~Freitas, 
Maximal determinants of Schr\"odinger operators on bounded intervals. 
\emph{J. \'Ec. Polytech. Math.} {\bf 7} (2020), 803--829.

\bibitem[Ahl79]{Ahl79}
L.V.~Ahlfors,
\emph{Complex analysis} (3rd edition),
McGraw Hill 1979, USA, 336 pp., ISBN: 978-0070006577. 

\bibitem[AS94]{AS94}
E. Aurell and P.~Salomonson,
On functional determinants of Laplacians in polygons and simplicial complexes,
\emph{Commun. Math. Phys.} {\bf 165} (1994), 233--259.

\bibitem[BFK95]{BFK95}
D.~Burghelea, L.~Friedlander, and T.~Kappeler,  
On the determinant of elliptic boundary value problems on a line segment,
\emph{Proc. Amer. Math. Soc.} {\bf 123} (1995), 3027--3038.

\bibitem[BGKE96]{BGKE96}
M. Bordag, B. Geyer, K.~Kirsten, and E.~Elizalde, 
Zeta function determinant of the Laplace operator on the D-dimensional ball,
\emph{Commun. Math. Phys.} {\bf 179} (1996), 215--234.

\bibitem[BF09]{BF09}
D.~Borisov and P.~Freitas, 
Eigenvalue asymptotics, inverse problems and a trace formula for the linear damped wave equation, 
\emph{J. Diff. Eq.} {\bf 247} (2009), 3028--3039.

\bibitem[Dun08]{Dun08}
G.V.~Dunne,
Functional determinants in quantum field theory,
\emph{J. Phys. A: Math. Theor.} {\bf 41} (2008), 304006.

\bibitem[EMOT53]{EMOT53}
A. Erd\'elyi, W. Magnus, F. Oberhettinger, and F. G. Tricomi 
\emph{Higher Transcendental Functions. Vol. I}, 
McGraw-Hill Book Company 1953, Inc., New York-Toronto-London, ISBN: 0-89874-206-4. 	

\bibitem[Fre18]{Fre18}
P. Freitas, 
The spectral determinant of the isotropic quantum harmonic oscillator in arbitrary dimensions, 
\emph{Math. Ann.} {\bf 372} (2018), 1081--1101.

\bibitem[FL19]{FL19}
P.~Freitas and J.~Lipovsk\'y, 
Spectral determinant for the damped wave equation on an interval,
\emph{Acta Physica Polonica A} {\bf 136} (2019), 817--823.

\bibitem[FL20]{FL20}
P.~Freitas and J. Lipovsk\'y, 
The determinant of one-dimensional polyharmonic operators of arbitrary order. 
\emph{J. Funct. Anal.} {\bf 279} (2020), 108783.

\bibitem[GK19]{GK19}
F.~Gesztesy and K.~Kirsten, 
Effective computation of traces, determinants, and $\zeta$-functions for Sturm-Liouville operators,
\emph{J. Funct. Anal.} {\bf 276} (2019), 520--562.

\bibitem[LS77]{LS77}
S.~Levit and U.~Smilansky, 
A theorem of infinite products of eigenvalues of Sturm-Liouville type operators. 
\emph{Proc. Am. Math. Soc.} {\bf 65} (1977), 299--302.

\bibitem[MP49]{MP49}
S.~Minakshisundaram and \AA.~Pleijel,
Some properties of the eigenfunctions of the Laplace-operator on Riemannian manifolds,
\emph{Can. J. Math.} {\bf 1} (1949) 242--256.

\bibitem[RS71]{RS71}
D. B. Ray and I. M. Singer,
R-torsion and the Laplacian on Riemannian manifolds,
\emph{Adv. Math.} {\bf 7} (1971), 145--210.

\bibitem[Rud86]{Rud86}
W.~Rudin,
\emph{Real and complex analysis} (3rd edition),
McGraw Hill 1986, Singapore, 483 pp., ISBN: 978-0070542341.
 
\bibitem[QHS93]{QHS93}
J.R.~Quine, S.H.~Heydari, and R.Y.~Song,
Zeta Regularized Products,
\emph{Trans. Am. Math. Soc.} {\bf 338} (1993), 213--231.

\bibitem[Vor80]{Vor80}
A. Voros,
The zeta function of the quartic oscillator,
\emph{Nuclear Physics B} {\bf 165} (1980), 209--236.

\end{thebibliography}
\end{document}